\documentclass[submission,copyright,creativecommons]{eptcs}

\usepackage{macros}
\usepackage{hyperref}
\usepackage{amsthm}
\usepackage{float}
\usepackage{caption}

\newtheorem{theorem}{Theorem}
\newtheorem{proposition}{Proposition}
\newtheorem{lemma}{Lemma}

\title{Compositional Algorithms for Succinct Safety Games}

\author{
Romain Brenguier\thanks{Authors supported by the ERC inVEST (279499) project.},
Guillermo A. P\'{e}rez\thanks{Author supported by F.R.S.-FNRS fellowship.},\\
Jean-Fran\c{c}ois Raskin$^\ast$\hspace{-1.7mm},
Ocan Sankur$^\ast$\\
\email{\{rbrengui,gperezme,jraskin,osankur\}@ulb.ac.be}
\institute{Universit\'{e} Libre de Bruxelles -- Brussels, Belgium}
}

\begin{document}

\maketitle

\begin{abstract}
We study the synthesis of circuits for succinct safety specifications given in
the AIG format.  We show how AIG safety specifications can be decomposed
automatically into sub-specifications.  Then we propose symbolic compositional
algorithms to solve the synthesis problem compositionally starting for the
sub-specifications.  We have evaluated the compositional algorithms on a set of
benchmarks including those proposed for the first synthesis competition
organised in 2014 by the Synthesis Workshop affiliated to the CAV
conference.  We show that a large number of benchmarks can be decomposed
automatically and solved more efficiently with the compositional algorithms that
we propose in this paper.
\end{abstract}

\section{Introduction}
We study the synthesis of circuits for succinct safety specifications given in
the AIG format.  An AIG file for synthesis describes a circuit that compactly
defines a transition relation between valuations for latches, {\em
uncontrollable} and {\em controllable} input signals.  The circuit contains a
special latch called the {\em error latch}.  Initially, all latches are false,
and the controller chooses values for the controllable input signals so as to
always keep the error latch {\em low} (safety objective), no matter how the
environment chooses values for the uncontrollable input signals. The AIG format
is {\em monolithic} in the sense that it is not explicitly structured into
subsystems.  This is unfortunate as in general, complex systems or
specifications are built of smaller sub-parts and taking into account this
structure may be a definite advantage. 

{\em And-Inverter Graphs} (AIG) have been proposed as a way to provide a simple
and compact file format for a model checking competition affiliated to CAV 2007
(see {\tt http://fmv.jku.at/aiger/FORMAT}). This format has been extended to be
the input format for the 2014 {\em reactive synthesis competition}.  Because the
synthesis competition uses the AIG format, and this format is monolithic, all
the tools that took part in the 2014 reactive synthesis competition solved the
synthesis problems {\em monolithically}. Nevertheless, the specifications that
were proposed during the 2014 synthesis competition are, for a large part of
them, generated from higher level descriptions of systems that bear structure.
For example, two of the most interesting sets of benchmarks, {\sf GenBuf} and
{\sf AMBA}, are generated from Reactive(1) specifications (a tractable subset of
LTL specifications)~\cite{BloemJPPS-jcss12}, or directly from LTL
specifications that are conjunctions of smaller LTL sub-formulas. 
   
In this paper, we show that part of the structure lost during the AIG format
translation can be recovered and used to solve the synthesis problem {\em
compositionally}.  First, we propose a static analysis of the AIG file that
returns, when possible, a decomposition of the circuit into smaller sub-circuits
with their own safety specifications. Then we provide three different algorithms
that first solve the sub-games corresponding to the sub-circuits and then
aggregate, following three different heuristics, the results obtained on the
sub-games. Namely, once we have the solution of all the sub-games we aggregate
them by
\begin{inparaenum}[$(i)$]
	\item taking their intersection -- which, we show, over-approximates the
		actual solution of the general game -- and applying the usual
		fixpoint algorithm to it;
	\item assigning a score to each pair of solutions based on the number of
		variables shared and the size of the BDDs obtained after their
		intersection and using said score to aggregate (pair by pair)
		all the solutions;
	\item trying to refine them using information from a single step of the
		fixpoint computation on the general game (\ie projecting the
		resulting ``bad'' states onto each sub-game).
\end{inparaenum}
We have implemented the decomposition, the compositional synthesis
algorithms, and evaluated the approach on the 2014 reactive synthesis
competition benchmarks as well as on new benchmarks produced from large LTL
specifications.

\paragraph{Related Work.}
In~\cite{FiliotJR10,FiliotJR11}, compositional algorithms are proposed for the
LTL realizability problem. The LTL formulas considered there are assumed to be
conjunctions of smaller LTL formulas, and so the structure of the specification
is directly available to them, while in our case it has to be recovered.  Also,
the main data-structures used there are based on antichains while we use BDDs.
In symbolic model checking algorithms, partitioned transition
relations~\cite{burch1991symbolic} are widely used whenever the system is made
of several components. Here, the goal is to compute the one-step successor
states without explicitly computing the conjunction of the transition relations
for each component.  The image computation is rather done using
\emph{quantification scheduling} heuristics which tries to apply variable
quantification as early as possible inside the conjunction; see \eg
\cite{wang2003compositional}.  We also use partitioned transition relations in
our algorithms: the next-state function for each latch is stored separately.
Unlike forward model checking algorithms, synthesis algorithms proceed
backwards, so we can use the \emph{composition} operation provided by BDD
libraries to compute predecessors, and we do not need any early quantification
heuristics.

\paragraph{Structure of the paper.}
In Section~\ref{sec:prelim}, we fix notation and recall the definitions needed to
present our results. Then, in Section~\ref{sec:decomp}, we describe the class of
decompositions our algorithms accept as input, we give some examples of how to
decompose a succinct safety specification given by an extended AIGER file and
outline the algorithm we implemented to get such a decomposition. Our algorithms
are described in detail in Section~\ref{sec:algos} and the results of our tests
are presented in Section~\ref{sec:experiments}.
      
\section{Preliminaries}\label{sec:prelim}
Let $\mathbb{B} = \{0,1\}$. Given a set of variables~$A$, 
a \emph{valuation over~$A$} is an element of~$\mathbb{B}^A$,
and a set of valuations over~$A$ is represented by its characteristic function
$f : \mathbb{B}^A \rightarrow \mathbb{B}$. We will write~$f(A)$
to make the dependency on the variables~$A$ explicit.
Given two disjoint sets of variables $A,B$, let us write
$\mathbb{B}^{A,B}$ for $\mathbb{B}^A \times \mathbb{B}^B$.
Consider variable sets $A\subseteq B$.
We define the \emph{projection} of a valuation~$v : \mathbb{B}^B$
to~$A$ as $v\downarrow_A : \mathbb{B}^A$, with $v\downarrow_A(a) =1$
if, and only if $v(a) = 1$.
We extend this notation to 
functions $f : \mathbb{B}^B \rightarrow \mathbb{B}$ by
$f \downarrow_A : \mathbb{B}^A \rightarrow \mathbb{B}$, 
defined as $f\downarrow_A(v)$ if, and only if $\exists v' \in \mathbb{B}^B,
f(v')$, and $v = v' \downarrow_A$.
We define the \emph{lifting} of a set~$f: \bool^A \rightarrow \bool$
in~$\bool^B$ by $f\uparrow_B(v) = 1$ if, and only if $f(v\downarrow_A) =1$.
For a set of variables~$A=\{a_1,a_2,\ldots\}$, let us
write~$A'=\{a_1',a_2',\ldots\}$ the set of \emph{primed variables}.  For~$f(A)$,
let $f(A')$ denote the characteristic function $f(A)$ where each variable $a \in
A$ has been renamed as its primed copy $a' \in A'$.

\paragraph{Symbolic Games.}
We formalize the reactive synthesis problem as a two-player turn-based game
with safety objective described symbolically. We consider games
defined by sequential synchronous circuits, 
encoded in the AIGER format.
More precisely, a \emph{game} is a tuple $G = \langle L, X_u, X_c, (f_l)_{l\in
L}, \out\rangle$, where:

\begin{enumerate}
\item $X_u, X_c, L$ are finite disjoint sets of Boolean variables representing
	\emph{uncontrollable inputs}, \emph{controllable inputs}, and
	\emph{latches} respectively;
\item for each latch $l\in L$, $f_l \colon \mathbb{B}^L \times \mathbb{B}^{X_u}
	\times \mathbb{B}^{X_c} \to \mathbb{B}$ is the \emph{transition
	function} that gives the valuation of~$l$ in the next step. In practice
	these functions will be
	given by And-Inverter Graphs (see below for a definition).
\item $\out \in L$  is a distinguished latch which indicates whether an error
	has occurred.  We will often modify the circuit by replacing~$\fbad$ by
	some other Boolean function~$e$, which we denote by $G[\fbad \leftarrow e]$.
\end{enumerate}

A \emph{state}~$q$ of game $G$ is a valuation of latches, that is an element of
$\mathbb{B}^{L}$.  A \emph{valuation}~$v$ in game $G$ is a valuation of latches
and inputs, that is an element of $\mathbb{B}^{L,X_u,X_c}$.  We denote the
\emph{global transition function} $\delta \colon  \mathbb{B}^L \times
\mathbb{B}^{X_u} \times \mathbb{B}^{X_c} \to \mathbb{B}^L$ such that
$\delta(v)(l) = f_l(v)$ for each latch~$l$. 
An \emph{execution} from valuation~$v$ of the game $G$ is a sequence of
valuations $(v_i)_{i \in \mathbb{N}} \in \left(\mathbb{B}^{L, X_u,
X_c}\right)^\omega$ such that $v_0 = v$ and for all $i$,
$$
	v_{i+1}\downarrow_L =
	\delta(v_i \downarrow_L ,v_i \downarrow_{X_u},v_i \downarrow_{X_c}).
$$
The execution is \emph{safe} if, for all $i \ge 0$, we have that $v_i(\out) =
0$.

Note that symbolic games define game arenas of exponential size but we will only
work on their symbolic representations.

\paragraph{Controller synthesis.}
The goal of \emph{controller synthesis} is to find a strategy to determine the
controllable inputs given uncontrollable inputs and the current state (\ie,
valuation of the latches) to ensure
that the error state is not reachable.  A \emph{strategy} is a
function~$\lambda\colon \mathbb{B}^{L, X_u} \to \mathbb{B}^{X_c}$.  An execution
$(v_i)_{i \in \mathbb{N}}$ is \emph{compatible} with $\lambda$ if for all $i \in
\mathbb{N}$, 
\[
	v_i\downarrow_{X_c} = \lambda(v_i\downarrow_{L},v_i\downarrow_{X_u}).
\]
A strategy $\lambda$ is
\emph{winning} if all executions that are compatible with $\lambda$ are safe.  A
valuation $v$ is \emph{winning} if there exists a strategy $\lambda$ that is winning
from $v$.  We denote $W(L,X_u,X_c)$ the \emph{winning valuations} of $G$, that is
the set of valuations that are winning.

\paragraph{And-Inverter Graphs.}
An \emph{And-Inverter Graph} (AIG) is a directed acyclic graph with two-input
nodes representing logical conjunction (AND gates), terminal nodes representing
inputs, and edges that are possibly \emph{inverted} to denote logical negation
(NOT gate). Formally, an AIG is a tuple $G = \langle V,E,\iota \rangle$ such that
$(V,E)$ is a directed graph with every vertex having $0$ or $2$ outgoing edges,
and $\iota : E \to \mathbb{B}$ labels inverted edges with $1$. We depict edges
(not) labelled by $\iota$ as arrows (not) marked with a dark dot.
Figure~\ref{fig:example-aig-1} shows a simple AIG with Boolean variables
$x_1,x_2,x_3,x_4$. Each node in the AIG defines a Boolean function. For example,
$v_1$ defines the Boolean function
$\phi_{v_1} \equiv x_1 \land \lnot \phi_{v_2}$, where
$\phi_{v_2}$ is the corresponding formula defined by $v_2$, since the edge from
$v_1$ to $v_2$ is marked as inverted.

The AIGER format~({\tt http://fmv.jku.at/aiger/FORMAT}) was defined as a standard
file format to describe sequential synchronous circuits (the logic defined as an
AIG), and has been used in model checking and synthesis competitions. In the
latter case, the inputs are partitioned into \emph{controllable} and
\emph{uncontrollable}~({\tt
http://www.syntcomp.org/wp-content/uploads/2014/02/Format.pdf}). This is the
format that we will assume as representation of the input game for our
algorithms.  We call an \emph{AIG game}, a symbolic game described in the AIGER
format.

\paragraph{Binary Decision Diagrams.}
Internally, our tool uses \emph{binary decision diagrams} (BDD)~\cite{bryant86}
to represent Boolean functions used to represent sets of states or (parts of)
transition relations.
We use classical operations and notation on BDDs and refer the interested reader
to~\cite{Andersen97anintroduction} for a gentle introduction to BDDs\@. 
Projection and lifting of functions are easily implemented with BDDs:
projecting is done by an existential quantification and lifting is a trivial
operation because it only extends the domain of the function but its logical
representation, \ie\ its Boolean formula, stays the same.

In our algorithms, we often use BDD operations which implement heuristics to
reduce the size of the given BDD\@,
namely, \emph{generalized cofactors}~\cite{tslbs-v90,shs-vb94}. 
A generalized
cofactor $\hat{f}(X)$ of $f(X)$ with respect to $g(X)$ yields a BDD 
that matches~$f(X)$ inside~$g(X)$, and is defined arbitrarily outside~$g(X)$.
This degree of freedom outside~$g(X)$ allows heuristics to reduce the BDD size.
We write $\hat{f}(X) = \gencof{f(X)}{g(X)}$. Formally, we have that $\hat{f}(X) \land g(X)
= f(X) \land g(X)$ and $\hat{f}$ has at most the size of $f$. BDD libraries
implement the operations \emph{restrict} or \emph{constrain} (see,
\eg~\cite{somenzi99}), which are specific generalized cofactors.

\paragraph{Classical Algorithms to Solve Safety Games.}
We recall the basic fixpoint computation for solving safety games, applied here
on symbolic safety games.  Let $G = \langle L, X_u, X_c, (f_l)_{l\in
L},\out \rangle$ be a symbolic game.
The complement of the set $W(L,X_u,X_c)\downarrow_L$ can be computed by
iterating an \emph{uncontrollable predecessors} operator.  For any set of states
$S(L)$, the \emph{uncontrollable predecessors} of~$S$ is defined as
\[
	\upre_G(S) = \{q \in \mathbb{B}^L \st 
		\exists x_u \in \mathbb{B}^{X_u}.\
		\forall x_c \in \mathbb{B}^{X_c} :
		\delta(q, x_u, x_c) \in S\};
\]
the dual \emph{controllable predecessors} operator is defined as
\[
	\cpre_G(S) = \{q \in \mathbb{B}^L \st 
		\forall x_u \in \mathbb{B}^{X_u}.\
		\exists x_c \in \mathbb{B}^{X_c} :
		\delta(q, x_u, x_c) \in S\};
\]
We denote by $\upre_G^*(S) = \mu X. (S \cup \upre_G(X))$, the \emph{least
fixpoint} of the function $F : X \to S \cup \upre_G(X)$ in the $\mu$-calculus
notation (see \cite{ej91}). Note that $F$ is defined on the powerset lattice,
which is finite. It follows from Tarski-Knaster theorem~\cite{tarski55} that,
because $F$ is monotonic, the fixpoint exists and can be computed by iterating
the application of $F$ starting from any value below it, \eg the least value of
the lattice. Similarly, we denote by $\cpre_G^*(S) = \nu X. (S \cap
\cpre_G(X))$, the \emph{greatest fixpoint} of the function $F : X \to S \cap
\cpre_G(X))$. Dually, we have that, because $F$ is monotonic, the fixpoint
exists and can be computed by iterating the application of $F$ starting from any
value above it, \eg the greatest value of the lattice.
When $G$ is clear from the context, we simply write $\upre$ ($\cpre$)
instead of $\upre_G$ ($\cpre_G$). The Proposition follows from well-known
results about the relationship between safety games and these operators (see,
\eg,~\cite{ag11}).
\begin{proposition}
	For any symbolic game $G = \langle L, X_u, X_c, (f_l)_{l \in L},
	\out\rangle$, we have
	\begin{itemize}
		\item $\cpre^*( (\out \mapsto 0)\uparrow_L) = \cpre(
			W(L,X_u,X_c)\downarrow_L)$; dually,
		\item $\upre^*( (\out \mapsto 1)\uparrow_L) = {\lnot
			\cpre(W(L,X_u,X_c) \downarrow_L)} = {\upre(\lnot
			W(L,X_u,X_c)\downarrow_L)}$.
	\end{itemize}
\end{proposition}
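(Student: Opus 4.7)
My plan is to prove both bullets by combining the standard fixpoint characterisation of the safety-winning region with the De~Morgan duality between $\cpre$ and $\upre$. I would work in three stages: first identify $W\downarrow_L$ with the greatest fixpoint of $F(X) = (\out\mapsto 0)\uparrow_L \cap \cpre(X)$; second, read off the first bullet via one application of the fixpoint equation; third, dualise to obtain the second bullet.

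For the first stage, I would check that $W\downarrow_L$ satisfies $W\downarrow_L = F(W\downarrow_L)$. The inclusion $W\downarrow_L \subseteq (\out\mapsto 0)\uparrow_L$ is immediate from the safety of the length-zero prefix, and $W\downarrow_L \subseteq \cpre(W\downarrow_L)$ follows because a winning strategy $\lambda$ from $q$, when queried on each $x_u$, outputs some $x_c$ whose successor is again winning; the reverse inclusion is obtained by stitching one-step choices into a full strategy by induction on the execution. Maximality is the standard argument: any $X$ with $X \subseteq F(X)$ yields a strategy that stays inside $X \subseteq (\out\mapsto 0)\uparrow_L$ forever, so $X \subseteq W\downarrow_L$. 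By Tarski--Knaster this identifies $W\downarrow_L$ with the greatest fixpoint $\cpre^*((\out\mapsto 0)\uparrow_L)$, and the first bullet is the fixpoint equation $\cpre^*((\out\mapsto 0)\uparrow_L) = (\out\mapsto 0)\uparrow_L \cap \cpre(W\downarrow_L)$ read one step out.

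For the second bullet, I would apply the pointwise duality $\upre(\lnot S) = \lnot \cpre(S)$, which comes from swapping $\exists x_u \forall x_c$ with $\forall x_u \exists x_c$ under complementation of $\delta(q,x_u,x_c) \in S$. Instantiated at $S = W\downarrow_L$ this yields $\lnot \cpre(W\downarrow_L) = \upre(\lnot W\downarrow_L)$. Promoting the same duality to fixpoints, using that complementation is an order-reversing involution on the powerset lattice, turns the greatest fixpoint of $F$ into the least fixpoint of $\lnot \circ F \circ \lnot$; together with $\lnot (\out\mapsto 0)\uparrow_L = (\out\mapsto 1)\uparrow_L$ this gives the chain $\upre^*((\out\mapsto 1)\uparrow_L) = \lnot \cpre^*((\out\mapsto 0)\uparrow_L) = \lnot \cpre(W\downarrow_L) = \upre(\lnot W\downarrow_L)$, which is exactly the second bullet.

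The main bookkeeping point, rather than a deep obstacle, is aligning the valuation-level definition of $W$ with its latch-only projection: I have to verify that $q \in W\downarrow_L$ holds iff there exists a strategy winning from $q$ against every choice of uncontrollable first input, so that the fixpoint equation is meaningful over $\mathbb{B}^L$ and does not silently depend on the $(x_u,x_c)$ components discarded by $\downarrow_L$. Once this is in place, everything reduces to routine lattice-theoretic fixpoint duality inside the $\mu$-calculus framework already set up in the paper.
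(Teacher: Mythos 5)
The paper does not actually prove this Proposition --- it is stated as following from well-known results with a pointer to \cite{ag11} --- so there is no in-paper argument to compare yours against; judged on its own terms, your plan has a genuine gap at its first stage, and it propagates to both bullets. You identify $W(L,X_u,X_c)\downarrow_L$ with the greatest fixpoint $\cpre^*((\out\mapsto 0)\uparrow_L)$. Under the paper's definitions this identification does not hold: $W$ is a set of \emph{valuations} over $L\cup X_u\cup X_c$ (each element already fixes a first pair of inputs), and $\downarrow_L$ is an \emph{existential} projection over both $X_u$ and $X_c$. Unfolding the definitions, $(q,x_u,x_c)\in W$ iff $q(\out)=0$ and $\delta(q,x_u,x_c)$ lies in the set of winning states, so $W\downarrow_L$ is the set of safe states from which \emph{some} input pair leads into the winning states --- in general a strict superset of the winning states, since a state may admit one favourable uncontrollable input while being lost under another ($\exists x_u\exists x_c$ versus the $\forall x_u\exists x_c$ required of a winning state). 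This is precisely why the statement wraps a $\cpre$ around $W\downarrow_L$ instead of equating the fixpoint with $W\downarrow_L$ itself; your plan erases that distinction. You even flag the alignment of $W$ with $W\downarrow_L$ as the main bookkeeping point, but you then resolve it with the universal reading (``winning against every choice of uncontrollable first input''), which is not what the projection $\downarrow_L$ defined in the preliminaries gives you.

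A second, smaller mismatch: even granting your identification, ``reading the fixpoint equation one step out'' yields $\cpre^*((\out\mapsto 0)\uparrow_L) = (\out\mapsto 0)\uparrow_L \cap \cpre(\cdot)$, whereas the first bullet asserts the bare equality with $\cpre(W\downarrow_L)$ and no intersection; you would still owe the inclusion $\cpre(W\downarrow_L)\subseteq(\out\mapsto 0)\uparrow_L$, which under your identification becomes $\cpre(\nu X.F(X))\subseteq(\out\mapsto 0)\uparrow_L$ and fails whenever an unsafe state can be controlled into the winning region. The repair is to carry out your (correct) two-directional fixpoint argument for the set of winning \emph{states} (the paper's $\solves(G)$), then derive the explicit description of $W$ in terms of that fixpoint (as in the paper's definition of $\solve$), and only then compute $\cpre(W\downarrow_L)$ from that description. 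Your third stage --- the pointwise De Morgan duality $\upre(\lnot S)=\lnot\cpre(S)$ and the $\mu$/$\nu$ duality under complementation --- is correct and standard; only the object it is anchored to is wrong.
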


In the rest of the paper, we assume a black-box procedure $\solve$ which, for a
given symbolic game, computes the corresponding winning valuations.  In
practice, $\solve$ can be implemented using $\upre$ or $\cpre$. Formally, 
\[
	\solve(G) = \{ (q,x_u,x_c) \in \mathbb{B}^{L,X_u,X_c} \st q(\out) = 0
	\land \delta(q,x_u,x_c) \not\in \cpre^*( (\out \mapsto 0){\uparrow_L})
	\}.
\]
Note that $\solve$ gives the set of winning valuations, and not the set of
winning states. The interpretation of $\solve(G)$ is that it is the maximal
permissive strategy: any strategy for the controller that ensures to stay within
this set is a winning strategy.  We also consider procedure $\solves(G) = \{ q
\in \mathbb{B}^L \st q \in \cpre^*\left( (\out \mapsto 0){\uparrow_L}
\right)\}$ which returns the set of winning states.

\paragraph{Optimizations Using Generalized Cofactors.}
Let us now establish the correctness of two optimizations we use in the sequel.

We first formalize the dependence on latches as follows.  The \emph{cone of
influence} (see, \eg,~\cite{cgp01}) of~$e_i$, written $\texttt{cone}(e_i)$, is
the set of variables on which~$e_i$ depends, that is, $\texttt{cone}(\Phi)
\subseteq L \cup X_u \cup X_c$ is the minimal set of variables such that if $x
\in \texttt{cone}(\Phi)$ then either $(\exists x: \Phi) \not \Leftrightarrow
\Phi$ or $x \in \texttt{cone}(f_{y})$ for some $y \in \texttt{cone}(\Phi) \cap
L$. For convenience, we denote by $\cone(\Phi)$ the set $\texttt{cone}(\Phi)
\cap L$.

Observe that we have defined the cone of influence of a Boolean function
semantically. That is to say, a variable $x$ is in the cone of
influence of a function $\Phi$ if and only if the set of valuations satisfying
$\Phi$ changes for some fixed valuation of $x$. Since we consider functions
given by AIGs, the cone of influence can be over-approximated by
exploring the AIG starting from the vertex corresponding to function~$\Phi$,
adding all latches and inputs visited and the cones of influence of the latches
-- computed recursively. In our implementation we use this over-approximation
when working on the AIG only and we use the definition on the semantics to
obtain an algorithm on BDDs -- which we use when working with BDDs.

Given an over-approximation $\Lambda$ of the winning valuations
\begin{inparaenum}[$(i)$]
\item we first simplify the transition relation and keep it precise only in
	$\Lambda$,
\item we further modify the transition relation by making every transition not
	allowed by $\Lambda$ go to an error state, \ie change
	$\fbad$.
\end{inparaenum}
In fact, correctness of the first optimization requires that the second one be
used as well. The following result summarizes the properties of these
optimizations.
\begin{lemma}\label{lem:correct-restrict}
	For any symbolic game $G = \langle L, X_u, X_c, (f_l)_{l \in L}, \out
	\rangle$, and any $\Lambda(L,X_u,X_c) \supseteq W(L,X_u,X_c)$, if we
	write $f'_l = \gencof{f_l}{\Lambda}$ for all $l \in L$, we have
	\[
		\solve(G) = 
		\solve(\langle \cone(\Lambda), X_u, X_c, (f'_l)_{l \in
		\cone(\Lambda)}\rangle[\fbad' \leftarrow \lnot \Lambda])
			\uparrow_L.
	\]
\end{lemma}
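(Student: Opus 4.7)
The proof rests on the defining property of the generalized cofactor, namely $f'_l \land \Lambda = f_l \land \Lambda$, which means the next-latch valuations computed by $G$ and by the reduced game agree whenever the current valuation lies in~$\Lambda$. The cone-of-influence construction additionally ensures that the latches in $L \setminus \cone(\Lambda)$ influence neither $\Lambda$ nor the transition functions $f'_l$ for $l \in \cone(\Lambda)$, so they can be projected away without affecting any reachability property; this justifies restricting the latch set to $\cone(\Lambda)$ and lifting back via~$\uparrow_L$.

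For the inclusion $\solve(G) \subseteq \solve(G') \uparrow_L$, I take a valuation $v \in \solve(G)$ together with a winning strategy $\lambda$. Every $\lambda$-compatible execution stays in $W \subseteq \Lambda$, so its $\cone(\Lambda)$-projection is also produced by $G'$ under the projected strategy, and the modified error bit $\fbad' = \lnot \Lambda$ stays at $0$ throughout the execution. Hence $v \downarrow_{\cone(\Lambda), X_u, X_c}$ is winning in~$G'$, giving $v \in \solve(G') \uparrow_L$.

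For the reverse inclusion, I take a winning strategy $\lambda'$ in $G'$, lift it to a strategy in $G$ that ignores the latches outside $\cone(\Lambda)$, and use the transition agreement to argue that the $\cone(\Lambda)$-projection of the lifted $G$-execution matches the $G'$-execution and hence stays in~$\Lambda$. The remaining step is to deduce that the original error $\fbad$ also never fires along this execution in $G$. I would establish this by induction on the greatest-fixpoint iteration defining $\cpre^*$, proving $\cpre^*_G\bigl((\out\mapsto 0)\uparrow_L\bigr) = \cpre^*_{G'}\bigl((\out\mapsto 0)\uparrow_{\cone(\Lambda)}\bigr)\uparrow_L$; the inductive step equates one application of $\cpre_G$ on subsets of~$\Lambda$ with one application of $\cpre_{G'}$ on the corresponding projections, using the transition agreement together with cone-of-influence independence.

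The main obstacle is that $\Lambda$ may strictly contain $W$, so staying in $\Lambda$ is \emph{a priori} weaker than staying in~$W$. The fixpoint argument has to show that any extra valuations in $\Lambda \setminus W$ are eliminated by both iterations in lockstep: the modified error $\lnot \Lambda$ in $G'$ precisely penalises moves exiting~$\Lambda$, and for any valuation in $\Lambda \setminus W$ the opponent can (by transition agreement) replay in $G'$ the $G$-strategy that forces the original $\fbad$, thereby forcing an exit from~$\Lambda$ and ensuring the valuation is removed from $\solve(G')$ as well. Once this alignment is carried out step-by-step along the fixpoint iteration, the two winning regions coincide and the stated equality follows.
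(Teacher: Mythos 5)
Your forward inclusion, the reduction to $\cone(\Lambda)$, and your closing observation---that from any valuation outside $W$ the environment can replay its $G$-spoiling strategy and, by the cofactor agreement inside $\Lambda$, force the $G'$-play out of $\Lambda$, so that $\Lambda$-but-not-$W$ valuations are losing for the objective $\lnot\Lambda$ as well---are all the right ingredients; the last one is exactly the paper's property $\solve(G[\fbad\leftarrow\lnot\solve(G)])=\solve(G)$, used together with monotonicity in the error function to sandwich $\lnot\Lambda$ between $\fbad$ and $\lnot W$. The gap is in the formal device you propose to carry this out: a ``lockstep'' induction equating the iterates of $\cpre^*_G\bigl((\out\mapsto 0)\uparrow_L\bigr)$ with those of $\cpre^*_{G'}$. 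These iterates do \emph{not} coincide stage by stage. Already at stage $1$, the $G'$ iteration discards every state from which the environment can force an exit from $\Lambda$ in one step, whereas the $G$ iteration only discards states from which it can force $\fbad$ in one step; a state from which the environment needs many steps to reach $\fbad$ but only one to leave $\Lambda$ is removed immediately on the $G'$ side and only much later on the $G$ side. Only the \emph{limits} of the two iterations agree, so the inductive step as you describe it fails. The paper sidesteps this by decoupling the two replacements: the change of error function from $\fbad$ to $\lnot\Lambda$ is argued at the level of the fixed points themselves (via the two abstract properties of $\solve$ above), and the iterate-by-iterate induction is reserved for the change of transition functions from $f_l$ to $f'_l$, where both games share the error function $\lnot\Lambda$ and the identity $\upre_{G'}(u)\cup u=\upre_H(u)\cup u$ for all $u\supseteq\lnot\Lambda$ makes the iterations genuinely synchronous.

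A second, smaller point: the ``remaining step'' of your reverse inclusion---that $\fbad$ never fires along an execution confined to $\Lambda$---is not something any fixpoint induction can deliver. It is the one-step condition $\fbad\Rightarrow\lnot\Lambda$, which does \emph{not} follow from the stated hypothesis $\Lambda\supseteq W$ (consider $\Lambda$ equal to the set of all valuations, for which $G'$ has error function false). The paper's proof also invokes this silently (``together with the fact that $\fbad\subseteq\lnot\Lambda$''); it does hold in every application of the lemma, because there $\Lambda$ is built as $\bigwedge_i W_i$ with each $W_i\subseteq\lnot e_i$, but a self-contained proof---yours included---needs to assume it explicitly.
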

\begin{proof}
  We first show that solving the game with error function~$\lnot \Lambda$ yields
  the same winning valuations as for~$\fbad$.  For that we will use two basic
  properties of the winning valuations: first if $f \subseteq f'$ then
  \[
	  \solve(G[\fbad \leftarrow f']) \subseteq \solve(G[\fbad \leftarrow f]);
  \]
  secondly
  \[
	  \solve(G[\fbad \leftarrow \lnot \solve(G)]) = \solve(G),
  \]
  this is because if an execution compatible with strategy~$\lambda$ reaches
  $\lnot \solve(G)$, then by definition of winning valuations it can be extended
  from there to an execution compatible with $\lambda$ that is unsafe. Together
  with the fact that $\fbad \subseteq \lnot\Lambda \subseteq \lnot W$, these
  properties imply that $\solve(G) = \solve(G[\fbad \leftarrow \lnot \Lambda])$.
  It is clear that one can consider only the variables in $\cone(\Lambda)$ for
  this computation, and thus considering $H = (\langle \cone(\Lambda), X_u, X_c,
  (f_l)_{l \in \cone(\Lambda)}\rangle [\fbad \leftarrow \lnot \Lambda])$, we
  have
  \[
	  \solve(G) = \solve(H) \uparrow_L.
  \]

  It remains to show that the same set~$\solve(H)$ is obtained when the
  functions $f_l'$ are used transition functions~$f_l$.  Let us denote $G' =
  (\langle \cone(\Lambda), X_u, X_c, (f_l')_{l \in \cone(\Lambda)}\rangle
  [\fbad' \leftarrow \lnot \Lambda])$.  We note that, for any $u \supseteq \lnot
  \Lambda$, the following holds:
  \[
	  \upre_G'(u) \cup u = \upre_H(u) \cup u.
  \]
  Hence, it is straightforward to show by induction that $\solve(H) =
  \solve(G')$.
\end{proof}

\section{Decomposing the Specification}\label{sec:decomp}
In this section, we describe how we decompose the error function~$\fbad$
of a given symbolic game into a disjunction \textit{\ie}
$\fbad \equiv \big(\bigvee_{1 \leq i \leq n} e_i\big)$.
Notice that if a strategy $\lambda(L, X_u, X_c)$ ensures
that $\fbad$ is never true then it also ensures that $e_i$
is never true.  We will then give algorithms that solve the game where
each~$e_i$ is seen as the error function, and combine the obtained solutions
into a global solution. 

The rationale behind this approach is that the functions~$e_i$ do not depend on
all latches in general, so solving the game for~$e_i$ is often efficient.

\paragraph{Sub-game.}
Given a decomposition of $\fbad$, we define a \emph{sub-game} $G_i$ by replacing
the error function by $e_i$ and considering only variables in its cone of
influence. Formally, we write 
\[
	G_i = \langle \cone(e_i), X_u, X_c, (f_l)_{l \in \cone(e_i)}
	\rangle[\fbad \leftarrow e_i].
\]
We will often use the notation
$G[\fbad \leftarrow e_i]$, which consists in replacing the function~$\fbad$
by~$e_i$.  In practice, the size of the symbolic representation of the sub-games
are often significantly smaller than that of the original game. Recall also that
winning all the sub-games is necessary to win the global game.  We write
$W_i(\cone(e_i),X_u,X_c)$ for the winning valuations of $G_i$.
In the implementation, $S_i$ and $S_i \uparrow_L$ are represented by the same BDD.

\begin{figure}
\centering
\begin{tikzpicture}[edge from parent/.style={draw,-latex}]
	\node[vertex] (root) {$v_1$}
	child {	node {$x_1$} }
	child {	node[vertex] {$v_2$}
		child {	node {$x_2$} }
		child {
			node[vertex] {}
			child{%
				node {$x_3$}
				edge from parent node[inverted]{}
			}
			child {node {$x_4$} }
		}
		edge from parent node[inverted]{}
	};
\end{tikzpicture}
\caption{Example AIG}
\label{fig:example-aig-1}
\end{figure}
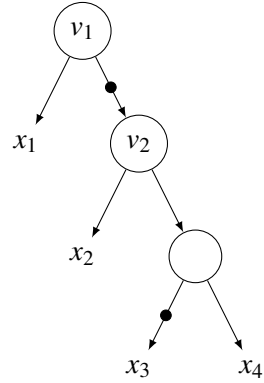

\paragraph{Example 1.} Consider the AIG shown in
Figure~\ref{fig:example-aig-1} where $x_1,x_2,x_3,x_4$ are all input variables.
We would like to decompose the function defined by the sub-tree rooted at $v_1$
(\ie\ the whole tree) which we will denote by $\phi_{v_1}$. It should be clear that
\(
	\phi_{v_1} \equiv x_1 \land \lnot \phi_{v_2}
\)
where $\phi_{v_2}$ is the function defined by the sub-tree rooted at $v_2$. In
turn, we also have that
\(
	\phi_{v_2} \equiv x_2 \land \lnot x_3 \land x_4.
\)
If we distribute the disjunction from $\lnot f_{v_2}$ we get that
\(
	\phi_{v_1} \equiv (x_1 \land \lnot x_2) \lor
		(x_1 \land x_3) \lor 
		(x_1 \land \lnot x_4).
\)
Thus, one possible decomposition of $\phi_{v_1}$ would be to take $e_1 = x \land
\lnot x_2$, $e_2 = x_1 \land x_3$, and $e_3 = x_1 \land \lnot x_4$.

The general steps followed in Example $1$ above can be generalized into an
algorithm which outputs a decomposition of the error function whenever one
exists. Intuitively, the algorithm consists in exploring all non-inverted edges
of the AIG graph from the vertex which defines the error function. If there are
no inverted edges which stopped the exploration, or if all of them lead to
leaves, the error function is in fact a conjunction of Boolean variables and can
clearly not be decomposed. Otherwise, there is at least one inverted edge
leading to a node representing an AND gate.  In this case, we can push the
negation one level down and obtain a disjunction which can be distributed to
obtain our decomposition. Algorithm~\ref{alg:decompose} details the procedure we
have implemented. It takes as input an AIG, whether the error function is
inverted, and the vertex $v_\out$ which defines the error function. It outputs a
set of functions whose conjunction is logically equivalent to the error
function.

We have kept our description of Algorithm~\ref{alg:decompose} and
Algorithm~\ref{alg:get-minput-and} (called by the former) informal.

\begin{algorithm}
	\small
	$to\_visit$ := $\{v_0\}$\;
	$pos$ := $\{\}$\;
	$neg$ := $\{\}$\;
	\While{$|to\_visit| > 0$}{
		Pop $u \in to\_visit$\;
		\eIf{$u$ is not a leaf}{%
			Let $e =(u,v)$ and $e' = (u,v')$ be s.t. $e,e' \in E$\;%
			\eIf{$\iota(e) = 1$}{%
				$neg$ := $neg \cup \{v\}$;%
			}{%
				$to\_visit$ := $to\_visit \cup \{v\}$;%
			}%
			\eIf{$\iota(e') = 1$}{%
				$neg$ := $neg \cup \{v\}$;%
			}{%
				$to\_visit$ := $to\_visit \cup \{v\}$;%
			}%
		}{%
			$pos$ := $pos \cup \{u\}$;%
		}%
	}
	\Return{$(pos,neg)$}
\caption{$\texttt{get\_minput\_and}(V,E,\iota,v_0)$}
\label{alg:get-minput-and}
\end{algorithm}

\begin{algorithm}
	\small
	$(pos, neg)$ := $\texttt{get\_minput\_and}(V,E,\iota,v_\out)$\;
	\If{$inv = 1$}
	{
		\Return{ $\{\lnot \phi_v \st v \in pos\} \cup \{\phi_v \st v \in neg\}$ }
	}
	\If{$inv = 0$ and all $v \in neg$ are leaves}
	{
		\Return{ $\{ \phi_{v_\out} \}$;
		\tcc*[f]{\scriptsize No decomposition possible}}
	}
	Take $v_0 \in \argmax\{||\texttt{get\_minput\_and}(V,E,\iota,v)|| \st
v \in neg\}$; \tcc*[f]{\scriptsize where $||(S_1,S_2)||  := |S_1| +
|S_2|$}\\
	$res$ := $\bigwedge_{u \in pos} \phi_u \land \bigwedge_{v \in neg
		\setminus \{v_0\}} \lnot \phi_v$\;
	$(pos, neg)$ := $\texttt{get\_minput\_and}(V,E,\iota,v_0)$\;
	\Return{ $\{res \land (\lnot \phi_v) \st v \in pos\} \cup \{ res \land \phi_v \st v \in neg\}$ }
\caption{$\decompose(V,E,\iota,inv, v_\out)$}
\label{alg:decompose}
\end{algorithm}

\paragraph{Example 2.} Consider a formula given by a set of assumption
formulas $\{A_i(L,X_u) \st 1 \le i \le n\}$ and a set of guarantees 
$\{G_j(L,X_u,X_c) \st 1 \le j\le m\}$.\footnote{This is actually the way in
	which the error formula is stated for, \eg, the \textsf{AMBA}
	benchmarks.}
The system we want to synthesize is expected to determine the
controllable inputs in way such that if the assumptions are true, then the
guarantees are met. This is formally stated as Equation~\ref{eqn:ass-gua}.
\begin{equation}\label{eqn:ass-gua}
	\Phi = \big(\bigwedge_{1 \le i \le n} A_i\big) \implies
	\big( \bigwedge_{1 \le j \le m} G_j \big)
\end{equation}
A natural decomposition for the error function $\lnot \Phi$ would be the
following: $\bigvee_{1 \le j \le m} \big(\lnot G_j \land \bigwedge_{1 \le i \le
n} A_i \big)$.  If $\lnot \Phi$ were given as the AIG depicted in
Figure~\ref{fig:example-aig-2}, then it is not hard too see that
Algorithm~\ref{alg:decompose} would yield a very similar
decomposition.
Indeed, as we have not assumed anything in particular about the formulas $A_i$
and $G_j$ we cannot tell whether Algorithm~\ref{alg:get-minput-and} will explore
beyond each $G_j$, thus giving us more sub-games than the proposed
decomposition.  However, in practice, this is even better as smaller sub-games
usually depend on less variables. This, in turn, could lead to them being easier
to solve.

\begin{figure}
\centering
\begin{tikzpicture}[edge from parent/.style={draw,-latex},
		    level 1/.style={sibling distance=3cm},
	    	    level 2/.style={sibling distance=1.5cm}]
	\node[vertex] (root) {}
	child {	node[vertex] {$a$}
		child {
			node[vertex] {}
			child {
				node {$A_n$}
				edge from parent[dotted]
			}
			child { 
				node {$A_i$}
				edge from parent[solid]
			}
			edge from parent[dotted]
		}
		child { node {$A_1$} }
	}
	child {	node[vertex] {$g$}
		child { node {$G_1$} }
		child {
			node[vertex] {}
			child { node {$G_j$} }
			child {
				node {$G_m$}
				edge from parent[dotted]
			}
			edge from parent[dotted]
		}
		edge from parent node[inverted]{}
	};
\end{tikzpicture}
\caption{One possible AIG for Equation~\ref{eqn:ass-gua}}
\label{fig:example-aig-2}
\end{figure}
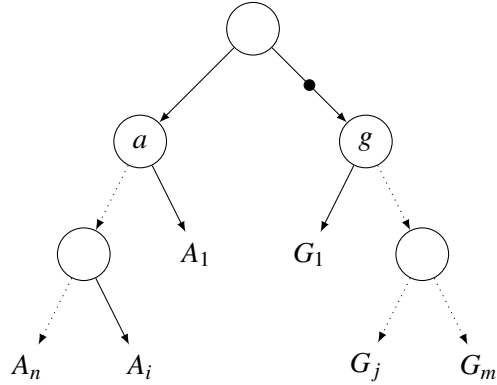

\begin{lemma}\label{lem:nec-allsub}
  For each sub-game~$G_i$ with new error function $e_i$, we have that
  \[ W(L,X_u,X_c) \subseteq (W_i \uparrow_L)(L,X_u,X_c).\]
\end{lemma}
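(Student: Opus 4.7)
The plan is to route the inclusion through an intermediate game $\tilde{G}_i := G[\fbad \leftarrow e_i]$, which retains the latches and transition functions of $G$ but uses $e_i$ as its error function. The desired inclusion then splits into (a)~$W(G) \subseteq W(\tilde{G}_i)$ and (b)~$W(\tilde{G}_i) = W_i \uparrow_L$.

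Step~(a) is a direct monotonicity observation. Since $\fbad \equiv \bigvee_j e_j$, we have $e_i \Rightarrow \fbad$, so any $\lambda$-compatible execution that is safe for $\fbad$ is also safe for $e_i$; hence any strategy $\lambda$ winning from $v$ in $G$ is also winning from $v$ in $\tilde{G}_i$. This is essentially the first monotonicity property invoked in the proof of Lemma~\ref{lem:correct-restrict}.

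Step~(b) rests on the observation that $\tilde{G}_i$ is ``cylindrical'' in the latches of $L \setminus \cone(e_i)$: by the definition of the cone of influence, both $e_i$ and every transition function $f_l$ with $l \in \cone(e_i)$ depend only on variables in $\cone(e_i) \cup X_u \cup X_c$. Consequently the projection commutes with the transition, $\delta_{\tilde{G}_i}(q, x_u, x_c)\downarrow_{\cone(e_i)} = \delta_{G_i}(q\downarrow_{\cone(e_i)}, x_u, x_c)$, and a short unfolding of the definition of $\cpre$ then gives $\cpre_{\tilde{G}_i}(T \uparrow_L) = \cpre_{G_i}(T) \uparrow_L$ for every $T \subseteq \mathbb{B}^{\cone(e_i)}$. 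A routine induction on the iterates of the greatest fixpoint $\cpre^*((\lnot e_i)\uparrow_{\bullet})$ then shows that the winning states of $\tilde{G}_i$ and $G_i$ coincide modulo lifting, and passing to winning valuations via the $\solve$ formula completes~(b).

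I do not foresee a substantive obstacle. The point that warrants care is the cone-of-influence bookkeeping in step~(b), making explicit that $f_l$ for $l \in \cone(e_i)$ does not reference latches outside $\cone(e_i)$, so that $\delta_G$ and $\delta_{G_i}$ agree on the relevant coordinates. Chaining~(a) and~(b) then yields $W \subseteq W(\tilde{G}_i) = W_i \uparrow_L$, as required.
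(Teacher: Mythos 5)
Your proof is correct, but it takes a genuinely different route from the paper's. The paper argues directly at the level of strategies: for each valuation $v'$ in the projection of $W$ onto $\cone(e_i)\cup X_u\cup X_c$ it picks a representative $v\in W$, takes a winning strategy $\lambda_v$ of the global game, and transfers it to a strategy $\lambda'$ of the sub-game, using exactly the two facts you also rely on --- that the transition relations of $G$ and $G_i$ coincide on $\cone(e_i)\cup X_u\cup X_c$, and that $e_i$ evaluates to false on $W$ because $e_i\Rightarrow\fbad$. You instead factor the inclusion through the intermediate game $G[\fbad\leftarrow e_i]$ on the full latch set and invoke two properties that the paper states only inside the proof of Lemma~\ref{lem:correct-restrict}: antitonicity of $\solve$ in the error function, and invariance of the winning region under restriction to the cone of influence, which you justify by commuting $\cpre$ with projection and inducting on the fixpoint iterates. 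Your version buys uniformity (the same two ingredients then serve both Lemma~\ref{lem:correct-restrict} and this lemma) and sidesteps the slightly delicate representative-selection in the paper's construction of $\lambda'$; the paper's version is more self-contained and produces an explicit witness strategy for the sub-game. One shared caveat: both arguments tacitly assume that the error latch does not feed back into the other transition functions, so that replacing $\fbad$ by $e_i$ leaves the rest of the execution unchanged; since the paper adopts this convention throughout, this is not a gap in your proof relative to the paper.
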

\begin{proof}
  For each valuation $v' \in W(L,X_u,X_c)\downarrow_{\cone(e_i) \cup X_u \cup
  X_c}$, we select a valuation $v \in W(L,X_u,X_c)$.  Let $\lambda_v$ be a
  winning strategy in $G$ from $v$.  Since there is no losing
  outcome for $\lambda_v$, for all $x_u \in \mathbb{B}^{X_u}$,
  $\lambda_v(\delta(v),x_u)$ is such that
  $(\delta(v),x_u,\lambda_v(\delta(v),x_u)) \in W(L,X_u,X_c)$.  For all $x_u \in
  \mathbb{B}^{X_u}$, we fix $\lambda'(\delta(v'),x_u)$ to be
  $\lambda_v(\delta(v),x_u)$.  We have that
  $(\delta(v'),x_u,\lambda'(\delta(v'),x_u)) \in W(L,X_u,X_c)
  \downarrow_{\cone(e_i) \cup X_u \cup X_c}$ because the transition relations of
  $G$ and $G_i$ coincide on $\cone(e_i) \cup X_u \cup X_c$.  The strategy
  $\lambda'$ ensures that any execution which starts in
  $W(L,X_u,X_c)\downarrow_{\cone(e_i) \cup X_u \cup X_c}$ stays inside
  $W(L,X_u,X_c)\downarrow_{\cone(e_i) \cup X_u \cup X_c}$.  Since $e_i$
  evaluates to false on $W(L,X_u,X_c)$, these states are not error states in
  $G_i$. Therefore $\lambda'$ is winning for all states in
  $W(L,X_u,X_c)\downarrow_{\cone(e_i) \cup X_u \cup X_c}$.  This implies that
  $W_i$ contains the projection of all winning states of $G$ and therefore $W
  \subseteq W_i\uparrow_L $.
\end{proof}

\section{Compositional Algorithms}\label{sec:algos}
In this section, we give three algorithms to solve AIG games compositionally.
Each algorithm first solves the sub-games, and then combines
the solutions using different heuristics.  We denote by
$\decompose$ the procedure that implements the decomposition of~$\fbad$
described in Section~\ref{sec:decomp}, and returns the set of error
functions~$e_i$.
In all three algorithms, we start by solving each sub-game and obtaining the
winning valuations~$W_i(L, X_u, X_c)$, for $1 \le i \le n$.
These steps are given in lines $1$--\ref{alg1-loc:endofloop}, and are common to
ll our algorithms; we assume that \solve{} raises an exception and terminates the
program if the sub-game cannot be won. Otherwise, we aggregate the results and
solve the global game; for the latter, we adopt a different approach in each of
the three algorithms.

\subsection{Global aggregation}

\begin{algorithm}
	\small
	$\{e_1,\ldots,e_n\}$ := $\decompose(\fbad)$;
	\tcc*[f]{\scriptsize Formulas $e_i(L,X_u,X_c)$ s.t. 
          $\fbad \equiv \bigvee_{1 \le i \le n} e_i$}\\
	\For{$1 \le i \le n$}
	{
		$w_i(L, X_u, X_c)$ := $\solve(\langle 
		\cone(e_i), X_u, X_c, (f_l)_{l \in \cone(e_i)}[\fbad \leftarrow
    e_i] \rangle){\uparrow_{L,X_u,X_c}}$\;
                \label{alg1-loc:endofloop}
	}
	$\Lambda(L, X_u, X_c)$ := $\bigwedge_{1 \le i \le n}
	w_i(L, X_u, X_c)$\;
	\lFor{$l \in \cone(\Lambda)$}
	{
	  $f_l'(L, X_u, X_c)$ := $\gencof{f_l(L, X_u, X_c)}{\Lambda(L, X_u,
		X_c)}$
	}
	\label{alg1-loc:beforeret}
	\Return $\solve(\langle 
	\cone(\Lambda), X_u, X_c, (f'_l)_{l \in \cone(\Lambda)} \rangle
	     [\fbad' \leftarrow \lnot \Lambda]){\uparrow_{L,X_u,X_c}}$\;
\caption{\texttt{comp\_1}$(\langle L, X_u, X_c, (f_l)_{l\in L}\rangle)$}
\label{alg:algo1}
\end{algorithm}

In Algorithm~\ref{alg:algo1}, we start by computing the intersection of the
winning valuations: $\Lambda = \bigwedge_{1\leq i \leq n} W_i$. In fact, any
valuation that is not in~$\Lambda$ is losing in one of the sub-games; thus in
the global game. Conversely, a strategy that stays in~$\Lambda$ is winning for
each sub-game.  Therefore, we solve the global game with the new safety objective of
avoiding $\lnot \Lambda$.
Before solving the global game, the algorithm attempts to reduce the size of the
transition relations by virtue of Lemma~\ref{lem:correct-restrict}.

\begin{theorem}
  \label{thm:algo1-correct}
  Algorithm~\ref{alg:algo1} computes the winning valuations of the given AIG game.
\end{theorem}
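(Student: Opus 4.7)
The plan is to derive the theorem as a direct combination of Lemma~\ref{lem:nec-allsub} and Lemma~\ref{lem:correct-restrict}. The role of the former is to certify that the aggregated set $\Lambda$ computed by the algorithm is a sound over-approximation of the global winning valuations; the role of the latter is to certify that, once such an over-approximation is in hand, replacing $\fbad$ by $\lnot \Lambda$ and cofactoring the latch transition functions by $\Lambda$ preserves $\solve(G)$ exactly.

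Concretely, I would proceed as follows. First, I would argue that each $w_i$ produced in lines~$2$--\ref{alg1-loc:endofloop} equals $W_i \uparrow_{L,X_u,X_c}$, where $W_i$ is the set of winning valuations of the sub-game $G_i = G[\fbad \leftarrow e_i]$ restricted to $\cone(e_i)$; this is exactly the contract of $\solve$ together with the fact that lifting from $\cone(e_i) \cup X_u \cup X_c$ to $L \cup X_u \cup X_c$ does not change the Boolean function. Second, I would invoke Lemma~\ref{lem:nec-allsub} to obtain $W(L,X_u,X_c) \subseteq W_i \uparrow_L$ for every $i$, hence
\[
  W(L,X_u,X_c) \;\subseteq\; \bigwedge_{1 \le i \le n} w_i \;=\; \Lambda(L,X_u,X_c).
\]

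Third, since $\Lambda \supseteq W$, the hypothesis of Lemma~\ref{lem:correct-restrict} is satisfied. Applying that lemma directly, with $f'_l = \gencof{f_l}{\Lambda}$ for $l \in \cone(\Lambda)$, yields
\[
  \solve(G) \;=\; \solve\!\bigl(\langle \cone(\Lambda), X_u, X_c, (f'_l)_{l \in \cone(\Lambda)}\rangle[\fbad' \leftarrow \lnot \Lambda]\bigr) \uparrow_L,
\]
which is exactly the value returned on line~\ref{alg1-loc:beforeret}. Since $\solve(G) = W(L,X_u,X_c)$ by definition, the theorem follows.

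The only subtlety I anticipate — and the only step worth spelling out carefully — is the bookkeeping around the variable sets: $W_i$ naturally lives over $\cone(e_i) \cup X_u \cup X_c$ while $\Lambda$ needs to be viewed over $L \cup X_u \cup X_c$ for Lemma~\ref{lem:correct-restrict} to apply, and Lemma~\ref{lem:nec-allsub} itself is stated via the projection $\downarrow_{\cone(e_i)\cup X_u \cup X_c}$ and lifting $\uparrow_L$. I would therefore make the projection/lifting conversions explicit once, observe that lifting commutes with Boolean conjunction (so the conjunction taken by the algorithm coincides with $\bigwedge_i (W_i \uparrow_L)$), and then the two lemmas slot together with no further calculation.
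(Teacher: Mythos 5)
Your proposal is correct and follows essentially the same route as the paper's own proof: Lemma~\ref{lem:nec-allsub} gives $W \subseteq \Lambda$, and Lemma~\ref{lem:correct-restrict} then justifies that cofactoring the transition functions by $\Lambda$ and replacing $\fbad$ with $\lnot\Lambda$ preserves $\solve(G)$, which is exactly what line~\ref{alg1-loc:beforeret} returns. The extra care you take with the projection/lifting bookkeeping is a reasonable elaboration of what the paper leaves implicit, not a different argument.
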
	
\begin{proof}
  We prove first that $W \subseteq \Lambda$ (that is for all valuation~$v$,
  $W(v) \Rightarrow \Lambda(v)$). Since $\lnot e_i \supseteq W(L,X_u,X_c)$, we
  get -- by Lem.~\ref{lem:correct-restrict} -- that
  each~$w_i(L,X_u,X_c)$ is $W_i\uparrow_L$ where $W_i$ is the winning valuations of
  the sub-game~$G_i$.  If $q \not\in \Lambda(L, X_u, X_c)$, there is a sub-game
  $G_i$ such that $\pi_i(q)$ is not winning.  By Lem.~\ref{lem:nec-allsub}, this
  implies that $q$ is not winning in $G$, hence $q\not\in W(L, X_u, X_c)$.

  From Lem.~\ref{lem:correct-restrict} it then follows that
  $\solve(G) = \solve(G')\uparrow_L$ and therefore the
  algorithm computes the correct result.
\end{proof}

\subsection{Incremental aggregation}
\begin{algorithm}
	\small
	$\{e_1,\ldots,e_n\}$ := $\decompose(\fbad)$;
	\tcc*[f]{\scriptsize Formulas $e_i(L,X_u,X_c)$ s.t. 
          $\fbad \equiv \bigvee_{1 \le i \le n} e_i$}\\
	\For{$1 \le i \le n$}
	{
		$w_i(L, X_u, X_c)$ := $\solve(\langle 
		\cone(e_i), X_u, X_c, (f_l)_{l \in \cone(e_i)}[\fbad \leftarrow
    e_i] \rangle){\uparrow_{L,X_u,X_c}}$\;
                \label{alg2-loc:endofloop}
	}
	$E$ := $\{ w_i \st 1 \le i \le n\}$\;
	\While{$|E| > 1$}
	{
		$\begin{array}{ll}
			(r,s) := \argmax_{(i,j) \in |E|^2 : i \neq j } &
			  \{{\alpha} \cdot \texttt{bddsize}(\lnot (w_i \land
				w_j))\\
			&+ \beta |\cone(w_i) \cap \cone(w_j)|\\
			&+ \gamma |\cone(w_i) \cup \cone(w_j)|\};
		\end{array}$
		\label{algo2:combination}

		\lFor{$l \in \cone(w_r \land w_s)$}
		{
			$f_l'(L,X_u,X_c)$ := $\gencof{f_l(L,X_u,X_c)}{
			(w_r \land w_s)}$
		}
		$w(L,X_u,X_c)$ := $\solve(\langle \cone(w_r \land w_s),X_u,X_c,
		(f_l')_{l \in \cone(w_r \land w_s)}\rangle[ \fbad \leftarrow\lnot 
		w_r \lor \lnot w_s]){\uparrow_{L,X_u,X_c}}$\;
		\label{line:algo2:solve}
		Remove $w_r,w_s$ and add $w$ to $E$\;
		\label{line:addrs}
	}
	\Return last $w(L,X_u,X_c) \in E$\;
\caption{\texttt{comp\_2}$(\langle L, X_u, X_c, (f_l)_{l\in L}\rangle, \alpha,
\beta, \gamma)$}
\label{alg:algo2}
\end{algorithm}
In Algorithm~\ref{alg:algo2}, we aggregate the results of the sub-games
\emph{incrementally}: given the list of winning valuations $w_i$ for the
sub-games, at each iteration, we choose and remove two sub-games~$i$ and~$j$,
solve their conjunction (as in Algorithm~\ref{alg:algo1}, with error function
$\lnot (w_i \land w_j)$), and add the newly obtained winning valuations back in
the list. 
To choose the sub-games, we use the following heuristics; we assign a score to
each pair of sub-games based on the size of the BDD of the error function $\lnot
(w_i\land w_j)$, and on the number of shared latches, and the number
of the latches that appear in either of the sub-games.  Intuitively, we prefer
to work with small BDDs, and to merge sub-games that share a lot of latches,
while yielding a small number of total latches.  We thus use a linear
combination at line~\ref{algo2:combination} to choose the best scoring pair. In
our experiments, we used $\alpha=-2, \beta = 1, \gamma = -1$.

\begin{theorem}\label{thm:algo2-correct}
  Algorithm~\ref{alg:algo2} computes the winning valuations of the given AIG game.
\end{theorem}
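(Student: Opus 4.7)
The plan is to mirror the proof of Theorem~\ref{thm:algo1-correct} but to proceed by induction on the iterations of the while loop (lines 4--\ref{line:addrs}). The invariant I maintain is that every function $w$ currently in $E$ satisfies $w \supseteq W(L,X_u,X_c)$, and in fact every element added to $E$ inside the loop is equal to $W$.

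The base case is established right after lines 1--\ref{alg2-loc:endofloop}: each $w_i$ is (the lifting to $L$ of) the winning valuations of the sub-game $G_i$, so Lemma~\ref{lem:nec-allsub} immediately gives $w_i \supseteq W$. For the inductive step, suppose the invariant holds when the loop selects a pair $(r,s)$, and set $\Lambda := w_r \land w_s$. Both conjuncts contain $W$ by hypothesis, so $\Lambda \supseteq W$; moreover $\lnot w_r \lor \lnot w_s \equiv \lnot \Lambda$, and the cofactored transition functions computed just before line~\ref{line:algo2:solve} are exactly the $f'_l = \gencof{f_l}{\Lambda}$ for $l \in \cone(\Lambda)$ that appear in Lemma~\ref{lem:correct-restrict}. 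Thus, the game solved on line~\ref{line:algo2:solve} is precisely the restricted game from Lemma~\ref{lem:correct-restrict}, and that lemma yields $w = \solve(G) = W(L,X_u,X_c)$. Replacing $w_r, w_s$ by $w$ therefore preserves the invariant and in fact makes the new entry equal to $W$ exactly. When the loop terminates with $|E|=1$, the returned element is $W$.

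The degenerate case $n=1$ (in which the while loop is never entered) is handled separately: then $e_1 \equiv \fbad$, so a single application of Lemma~\ref{lem:correct-restrict} with $\Lambda = \lnot e_1$ shows $w_1 = W$. The main thing to verify with care is that the cone restriction, the cofactored transition functions, and the error-function replacement $\fbad \leftarrow \lnot w_r \lor \lnot w_s$ really do line up with the single parameter $\Lambda = w_r \land w_s$ of Lemma~\ref{lem:correct-restrict}; this is routine but easy to slip on. Once that correspondence is recognized, correctness is independent of the scoring heuristic on line~\ref{algo2:combination}, which is exactly what one wants from a purely heuristic aggregation order.
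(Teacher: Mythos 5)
Your inductive invariant is too strong, and the step that establishes it is exactly where the argument breaks. You claim that the element $w$ produced at line~\ref{line:algo2:solve} equals $W(L,X_u,X_c)$ because Lemma~\ref{lem:correct-restrict} applies with $\Lambda = w_r \land w_s$. But the conclusion of that lemma --- that solving the game with error function $\lnot\Lambda$ recovers $\solve(G)$ --- relies, as its proof shows, on the sandwich $\fbad \subseteq \lnot\Lambda \subseteq \lnot W$, i.e.\ on $\lnot\Lambda$ \emph{over-approximating the original error function}. Here that fails: $w_r \land w_s$ only excludes valuations that are losing with respect to the error conditions $e_k$ for the sub-games already aggregated into $w_r$ and $w_s$; it says nothing about the remaining $e_k$, so in general $\fbad \not\subseteq \lnot(w_r\land w_s)$. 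Consequently the game solved at line~\ref{line:algo2:solve} yields the winning valuations for the \emph{partial} error function $\bigvee_{k \in F_r \cup F_s} e_k$ (where $F_r,F_s$ record which of the original sub-games fed into $w_r,w_s$), which is in general a strict superset of $W$ --- for instance whenever some not-yet-merged $e_k$ imposes further constraints. So ``every element added inside the loop equals $W$'' is false. The weaker half of your invariant ($w \supseteq W$, via Lemma~\ref{lem:nec-allsub}) is true but does not by itself give correctness at termination, since containing $W$ is not the same as equalling it.

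What is actually needed --- and what the paper proves --- is the intermediate characterization that each element $w$ of $E$ equals $\solve(G[\fbad\leftarrow\bigvee_{k\in F(w)}e_k])$, where $F(w)$ is the set of original sub-games aggregated into $w$. Establishing this for the merged element requires an argument you skip: one must show that replacing the error function $\bigvee_{k\in F_r\cup F_s} e_k$ by the larger set $\lnot w_r \lor \lnot w_s$ does not change the winning valuations. The paper does this by observing that $\lnot w_r \lor \lnot w_s$ contains that disjunction and is itself contained in its set of losing valuations --- the same two monotonicity facts used inside Lemma~\ref{lem:correct-restrict}, but applied to the \emph{partial} game rather than to $G$. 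Correctness then follows only at the final iteration, when $F(w)=\{1,\ldots,n\}$ and the partial disjunction becomes $\fbad$. Your observations that the scoring heuristic at line~\ref{algo2:combination} is irrelevant to correctness and that the case $n=1$ is immediate are both fine, but the inductive step needs to be redone along these lines.
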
	
\begin{proof}
Let us denote by $w^i_1, \dots, w^i_{n_i}$ the content of~$E$ at the beginning
of iteration~$i$.  We define a function $F$ from winning valuations $w^i_j$
to subsets of $\{1,\ldots,n\}$.  Intuitively, $F(w^i_j)$ is the set of sub-games
that were solved to obtain~$w^i_j$. For instance, at the first iteration, if
sub-games~$r,s$ are combined -- and the result, $w$, is added to $E$ -- then we
get $F( w ) = \{r,s\}$. For convenience, we assume that $w$ is appended at the
end of the sequence $w^i_1, \dots, w^i_{n_i}$ at line \ref{line:addrs}.

We proceed by induction on~$i$ to define~$F$.  Initially $F( w_i^1 ) =
\{ i \}$ for all~$1\leq i \leq n$.  For $i>1$, for all~$j \neq r,s$, the
element $w^i_j$ remains in the list so~$F$ is already defined on $w^i_j$.
For the newly element $w^i_{n_i}$ we let $F( w^i_{n_i} ) =
F( w^{i-1}_r ) \cup F(w^{i-1}_s)$.

We claim that at any iteration~$i$, $w^i_j$ is the winning valuations of the game
whose error function is the disjunction of the negation of the
winning valuations of the sub-games in $F(w_j^i)$. More precisely,
\[
	w_j^i = \solve(\langle L, X_u, X_c,(f_l)_{l \in L}\rangle[\fbad \leftarrow
	\bigvee_{k \in F(w^i_j)} e_k]).
\]
The correctness of the algorithm will follow since the sets $F(\cdot)$ are
merged at each iteration, and the algorithm always stops with~$|E|=1$ and $F(w)
= \{1,\ldots,n\}$.

The condition holds initially as shown in Theorem~\ref{thm:algo1-correct}.
Let~$i>1$.  As shown in Lem.~\ref{lem:correct-restrict}, the generalized
cofactor operation applied before the call to \texttt{solve} does not affect the
returned set.  Let us denote $E_r = \bigvee_{k \in F(w^{i-1}_r)} e_k$ and $E_s =
\bigvee_{k \in F(w^{i-1}_s)} e_s$.  Let us write $\mathcal{E}= E_r \lor E_s$.
We have $E_r \Rightarrow \lnot w_r$ by induction, and similarly $E_s \Rightarrow
\lnot w_s$; thus $\mathcal{E} \Rightarrow \lnot w_r \lor \lnot w_s$.  Moreover,
for any $q(L,X_u)$ if the controller plays strategy~$x_u \in \mathbb{B}^{X_c}$
with $\lnot w_r(q,x_u)$, or $\lnot w_s(q,x_u)$, then he loses for the error
function defined by $\mathcal{E}$. In other terms, $\lnot w_r \lor \lnot w_s$ is
a subset of losing valuations for error function~$\mathcal{E}$, and contains
$\mathcal{E}$, the set of states losing in one step.  It follows that
$w(L,X_u,X_c)$ computed at step~\ref{line:algo2:solve} is the winning valuations for
the error function~$\mathcal{E}$.
\end{proof}

\subsection{Back-and-forth}
\begin{algorithm}
\small
	$\{e_1,\ldots,e_n\}$ := $\decompose(\fbad)$;
	\tcc*[f]{\scriptsize Formulas $e_i(L,X_u,X_c)$ s.t. 
          $\fbad \equiv \bigvee_{1 \le i \le n} e_i$}\\
	\For{$1 \le i \le n$}
	{
		$w_i(L,X_u,X_c)$ := $\solve(\langle 
		\cone(e_i), X_u, X_c, (f_l)_{l \in \cone(e_i)}[\fbad \leftarrow
		e_i] \rangle){\uparrow_{L,X_u,X_c}}$\;
		$s_i(L)$ := $\solves(\langle 
		\cone(e_i), X_u, X_c, (f_l)_{l \in \cone(e_i)}[\fbad \leftarrow
		e_i] \rangle){\uparrow_{L}}$\;
                \label{alg3-loc:endofloop}
	}
	$\Lambda(L, X_u, X_c)$ := $\bigwedge_{1 \le i \le n}
	w_i(L, X_u, X_c)$\;
	$G'$ := $\langle \cone(\Lambda), X_u, X_c, (f_l)_{l \in
	\cone(\Lambda)}\rangle[\fbad \leftarrow \lnot \Lambda ]$\;
	$u(L)$ := $0$\;
	$u'(L)$ := $\bigvee_{1 \le i \le n} \lnot s_i(L)$;
	\tcc*[f]{The union of all the losing states}\\
	\While{$u \neq u'$}
	{
		$u(L)$ := $u'(L)$\;
		$u'(L)$ := $u(L) \lor \upre_{G'}(u)$\;
	  \label{line:upre}
		\For{$1 \le i \le n$}
		{
			$p_i(L)$ := $\forall L \setminus \cone(e_i) :
			u'(L)$;
			\tcc*[f]{Universal projection of latches not present in local
			sub-game}\\
			\If{$p_i \land s_i \neq 0$}
			{
				\lFor{$l \in \cone(p_i)$}
				{
					$f_l'(L,X_u,X_c)$ := $\gencof{f_l(L,X_u,X_c)}{
					\lnot p_i(L)}$
				}
				$s_i(L)$ := $\solves(\langle X_u, X_c,
					\cone(p_i), (f'_l)_{l \in
					\cone(p_i)}\rangle[\fbad \leftarrow
					p_i\uparrow_{L,X_u,X_c}]){\uparrow_{L}}$\;
			}
		}
		$u'(L)$ := $u'(L) \lor \lnot \bigwedge_{1 \le i \le n} (s_i(L)
		\downarrow_L)$\;
	}
	\Return $\lnot u(L)$\;
\caption{\texttt{comp\_3}$(\langle L, X_u, X_c, (f_l)_{l\in L}\rangle)$}
\label{alg:algo3}
\end{algorithm}

In Algorithm~\ref{alg:algo3}, we interleave the analysis of the global game
(with objective $\Lambda$) and the analysis of the sub-games. At each iteration,
we extend the losing states $u(L)$ by one step, by applying once the $\upre$
operator.  We then consider each sub-game, and check whether the new set $u'(L)$
of losing states (projected on the sub-game), changes the local winning states.
Here, $p_i(L)$ is this projection on the local state-space of sub-game $i$.  We
update the strategies~$\lambda_i$ of the sub-games when necessary, and restart
until stabilization.  Because analyzing the sub-games is often more efficient
than analyzing the global game, this algorithm improves over
Algorithm~\ref{alg:algo1} in some cases (see the experiments' section).  A
similar idea was used in \cite{FiliotJR11} for the problem of synthesis from LTL
specifications.

\begin{theorem}
  Algorithm~\ref{alg:algo3} computes the winning valuations of the given AIG game.
\end{theorem}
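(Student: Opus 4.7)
My plan is to prove the theorem by exhibiting a loop invariant, establishing termination, and then deriving both set-inclusions between $\lnot u$ and $W\downarrow_L$ at the fixed point.

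The invariant I would maintain through the main loop is twofold: $(a)$ both $u(L)$ and $u'(L)$ are subsets of the projection $\lnot W\downarrow_L$ of losing valuations onto latches; $(b)$ each $s_i$ is the winning-state set of a sub-game $\tilde G_i$ obtained by disjunctively augmenting $e_i$ with a formula whose lift to $L \cup X_u \cup X_c$ lies inside $\lnot W\downarrow_L$. From $(b)$ and Lemma~\ref{lem:nec-allsub} applied to $\tilde G_i$, we get $\lnot s_i \uparrow_L \subseteq \lnot W\downarrow_L$. Preservation of $(a)$ at line~\ref{line:upre} uses Lemma~\ref{lem:correct-restrict}: the transition relations of $G$ and $G'$ agree on the losing region, so $\upre_{G'}$ of a losing set stays losing in $G$. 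Preservation of $(b)$ follows from the observation that the universal projection $p_i = \forall L \setminus \cone(e_i) : u'(L)$ satisfies $p_i\uparrow_L \subseteq u'$, so any state it introduces as a new error is already known to be globally losing; the generalized-cofactor simplification of the local transition functions does not affect the recomputed $s_i$, again by Lemma~\ref{lem:correct-restrict}.

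Termination is immediate: $u$ is monotonically non-decreasing in the finite subset lattice over $\mathbb{B}^L$, so eventually $u = u'$. At that point the body of the loop adds no new states, which in particular yields $\upre_{G'}(u) \subseteq u$ (so $u$ is $\upre_{G'}$-closed) and $\bigvee_i \lnot s_i \subseteq u$. Soundness of the returned value, $\lnot u \supseteq W\downarrow_L$, is exactly invariant~$(a)$.

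For completeness, equivalently $\lnot W\downarrow_L \subseteq u$, I argue as follows. Because $\out$ is a latch (its next-state function is replaced but the latch itself persists in every sub-game), every state $q$ with $q(\out) = 1$ is losing in every $G_i$, so $\{q : q(\out) = 1\} \subseteq \bigvee_i \lnot s_i \subseteq u$. Combined with the $\upre_{G'}$-closure of $u$ and Knaster-Tarski, this gives $u \supseteq \upre^*_{G'}\bigl((\out \mapsto 1)\uparrow_L\bigr)$, and Lemma~\ref{lem:correct-restrict} identifies this fixed point with $\lnot W\downarrow_L$. The main obstacle I anticipate is making invariant $(b)$ precise in the presence of both the generalized-cofactor simplification and the universal projection: one must verify that re-running \solves{} on the reduced sub-game over $\cone(p_i)$ truly returns winning states of an enriched game whose extra errors lie in $\lnot W\downarrow_L$, so that the eventual $\lnot \bigwedge_i s_i$ aggregation never excludes a genuinely winning valuation.
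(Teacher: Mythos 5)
Your proposal is correct and follows essentially the same route as the paper's proof: the same two-part loop invariant ($u' \subseteq \lnot W$ and each $s_i$ over-approximating $W$ because its augmented error set stays inside the losing region), the same appeal to Lemma~\ref{lem:correct-restrict} for the cofactor and error-replacement steps, and the same termination argument on the finite lattice. The only cosmetic difference is that you obtain completeness by noting $u$ is $\upre_{G'}$-closed and contains the error states and invoking Knaster--Tarski, whereas the paper argues that $u'$ contains the $i$-th iterate of the $\upre$ fixpoint after iteration $i$; these are interchangeable.
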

\begin{proof}
    Let~$W(L)$ denote the set of winning states of the game~$G$.
    We consider the following invariant.
    \begin{equation}
    \begin{array}{l}
     \forall i\in\{1,\ldots,n\}, W(L) \subseteq s_i(L),
     \\
      \out \subseteq u'(L) \subseteq \lnot W(L).
    \end{array}
    \label{eqn:invar3}
    \end{equation}
    In words, in every iteration, 
    $u'(L)$ is contained in the losing valuations of the global game, and
    each $s_i(L)$ contains the winning valuations of~$G_i$.

    Initially, by Algorithm~\ref{alg:algo1}, $W \subseteq s_i(L)$
    for all~$i$, and we have
    $\out \subseteq \lnot s_i(L)$. So $\out \subseteq \lnot \land_i s_i(L)$.
    Thus, $\out \subseteq u'(L)$. 
    Moreover, since $\lor_i \lnot s_i(L) \subseteq \lnot W$, we have
    that $u'(L) \subseteq \lnot W$.

    Consider now iteration~$i>1$, and assume the invariant holds at the
    beginning. $u'(L)$ is updated at line~\ref{line:upre}.
    The property $\out \subseteq u'(L) \subseteq \lnot W$ still holds by the
    definition of the $\upre$ operation, and by the fact that the set $u'$ 
    can only grow at this step (because of the union).

    We consider now the for loop, and show that $W \subseteq s_i$ after each
    iteration. Assume $p_i \cap s_i \neq \emptyset$ since otherwise~$s_i$ is not
    modified. By definition $p_i \subseteq u'$
    thus $p_i \subseteq \lnot W$. Then the \texttt{solve} function
    computes the set of states from which the controller can avoid 
    $\out \lor p_i$. Since $\out \lor p_i \subseteq \lnot W$, we get that $s_i
    \subseteq W$.
    It follows that $\lnot \bigwedge_{i=1}^n s_i \subseteq \lnot W$.
    Thus, at the last line of the while loop, we have $\out \subseteq u'(L)
    \subseteq \lnot W$.

    Now, line~\ref{line:upre} ensures that after iteration~$i$, $u'(L)$ contains 
    the $i$-th iteration of the $\upre$ fixpoint computation. Hence, the test $u\neq
    u'$ of the while loop ensures that the while loop terminates with~$u(L)$
    being equal to~$\upre^*(G)$.
\end{proof}

\section{Experiments}\label{sec:experiments}
We implemented our algorithms in the synthesis tool AbsSynthe~\cite{bprs14}. We
compare their running times against the most efficient algorithm of AbsSynthe
that implements a backward fixpoint algorithm.\footnote{The new version of
	AbsSynthe with the implementation of the compositional algorithms can be
	fetched from \url{https://github.com/gaperez64/abssynthe}.}
This algorithm was the winner of the \textbf{$2014$ Synthesis Competition}
synthesis track, and the winner of the realizability track at the same
competition implemented a similar backward algorithm.

Let us first illustrate the advantage of the compositional approach
with two examples. In the first set of benchmarks we consider, the controller is to
compute the multiplication of two Boolean matrices given as (uncontrollable) input.
Since each cell of the resulting matrix
depends only on a subset of inputs, namely, on one row and one column, these
benchmarks are well adapted for compositional algorithms. 
Figure~\ref{fig:mult} compares the performances of the classical algorithm with
Algorithm~\ref{alg:algo1}. The classical algorithm was able to solve 36 instances,
while the compositional algorithm solved all 75 instances and was significantly
faster.
The x-axis shows the number of solved benchmarks
within the running time given by the y-axis.  
The second set of benchmarks we consider consist in a washing system made of~$n$
tanks. An uncontrollable input can request at any time the tank to be activated,
at which point the controller should fill the tank with water, and empty it 
after at least~$k$ steps. Moreover, some subsets of tanks cannot be filled at
the same time, and a light is to be on if at least one tank is filled with water.
Note that the control strategy for each tank is not independent due to mutual
exclusion constraints, and to the light indicator.
Algorithm~\ref{alg:algo1} was also more efficient on these benchmarks, as shown
on Fig.~\ref{fig:sched}. The classical algorithm solved 132 benchmarks out of 256, 
while Alg.~\ref{alg:algo1} solved 152.

\begin{figure}[H]
  \begin{minipage}{0.43\textwidth}
    \begin{center}
      \includegraphics[width=1\textwidth]{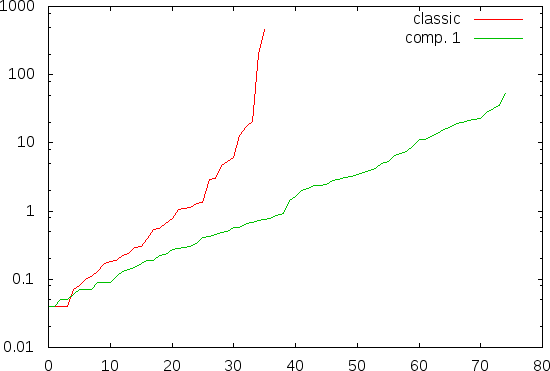}
      \caption{Performances for 75 Boolean matrix multiplication
        benchmarks for Algorithm~\ref{alg:algo1} and the classical algorithm.
      }
      \label{fig:mult}
    \end{center}
  \end{minipage}
  \qquad
  \begin{minipage}{0.43\textwidth}
    \begin{center}
      \includegraphics[width=1\textwidth]{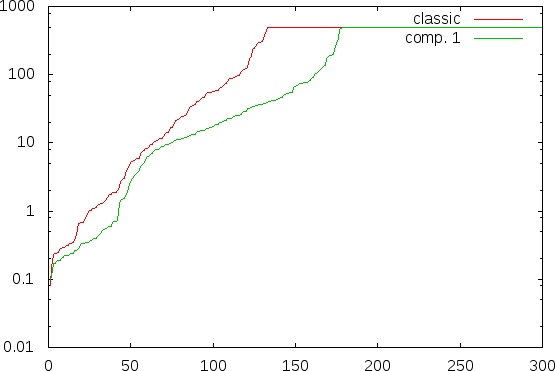}
      \caption{Peformances for the 256 washing system benchmarks
      for Algorithm~\ref{alg:algo1} and the classical algorithm.}
      \label{fig:sched}
    \end{center}
  \end{minipage}
\end{figure}

We now evaluate all three compositional algorithms and compare them with the classical
algorithm on a large benchmark set of 674 benchmarks.
$562$ of these benchmarks were provided for the \textbf{$2014$ Synthesis Competition}
and $105$ have been generated by the new version of LTL2AIG~\cite{ltl2aig} which
translates conjunctions of LTL specifications into AIG.\footnote{A collection of benchmarks,
including the ones mentioned here, can be fetched from
\url{https://github.com/gaperez64/bench-syntcomp14}
and \url{https://github.com/gaperez64/bench-ulb-syntcomp15}.}
Among those benchmarks, $351$ are decomposable by our static analysis into at
least $2$ smaller sub-games. More specifically, the average number of sub-games our
decomposition algorithm outputs is $29$; the median is $21$.

In general, the performances of the three compositional algorithms can differ,
but they are complementary.
Figures~\ref{fig:load} to~\ref{fig:amba} show the performances of the algorithms
on several sets of benchmarks. All benchmarks in
Figures~\ref{fig:load} and~\ref{fig:gb} are decomposable.  Figure~\ref{fig:best}
shows all the benchmarks we used and Figure~\ref{fig:amba} shows only those
benchmarks from last year's synthesis competition which were based on
specifications of the \textsf{AMBA} arbiter.

\paragraph{{\bf Conclusion.}} Even if AIG synthesis problems are monolithic, the
experiments show that our compositional approach was able to solve problems that
can not be handled by the monolithic backward algorithm; our compositional
algorithms are sometimes much more efficient. There are also examples that can
be decomposed but which are not solved more efficiently by the compositional
algorithms. So, it is often a good idea to apply all the algorithms in parallel.
This portfolio approach improved the performance and was able to solve 20
benchmarks that could not be solved by
the fastest algorithms of last year's reactive synthesis competition.

\begin{figure}[H]
  \begin{minipage}{0.49\textwidth}
    \begin{center}
      \includegraphics[width=1\textwidth]{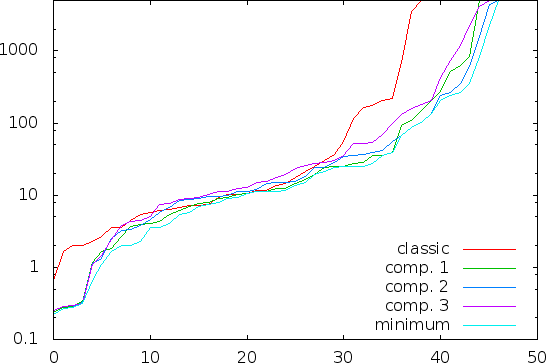}
      \caption{{Performances for 68 load-balancing benchmarks translated from LTL.
        The classical algorithm solves 38 benchmarks, comp.1 44, comp.2 45, comp.3 45.
        In total there are 46 benchmarks that can be solved. The largest example that can be solved has 4005 latches and the smallest example that cannot be solved has 670 latches.}
      }
      \label{fig:load}
    \end{center}
  \end{minipage}
  \begin{minipage}{0.49\textwidth}
    \begin{center}
      \includegraphics[width=1\textwidth]{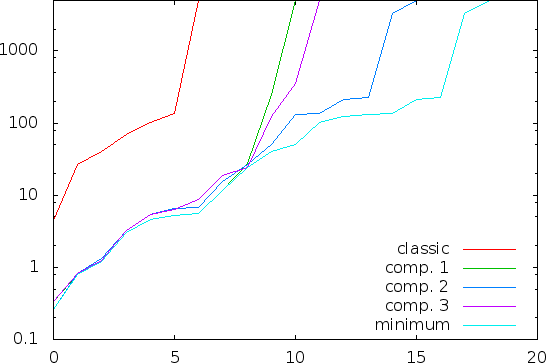}
      \caption{{Performances for 46 generalized buffer benchmarks translated from LTL.
        The classical algorithm solves 6 benchmarks, comp.1 10, comp.2 15, comp.3 11.
        In total there are 18 benchmarks that can be solved. The largest example that can be solved has 22662 latches and the smallest example that cannot be solved has 590 latches.}
      }
      \label{fig:gb}
    \end{center}
  \end{minipage}
  \begin{minipage}{0.49\textwidth}
    \begin{center}
      \includegraphics[width=1\textwidth]{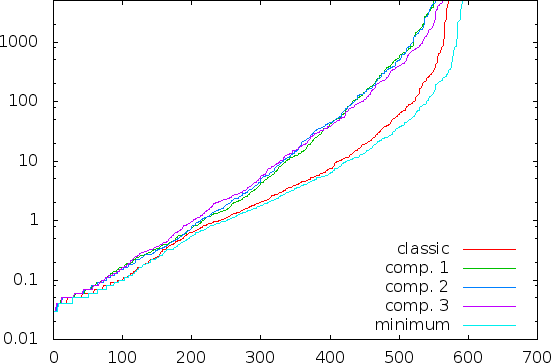}
      \caption{{Performances for the 674 benchmarks.
        The classical algorithm was able to solve 572 benchmarks.
        20 more benchmarks were solved by one of the three compositional algorithms.}
      }
      \label{fig:best}
    \end{center}
  \end{minipage}~\hfill~\begin{minipage}{0.49\textwidth}
    \begin{center}
      \includegraphics[width=1\textwidth]{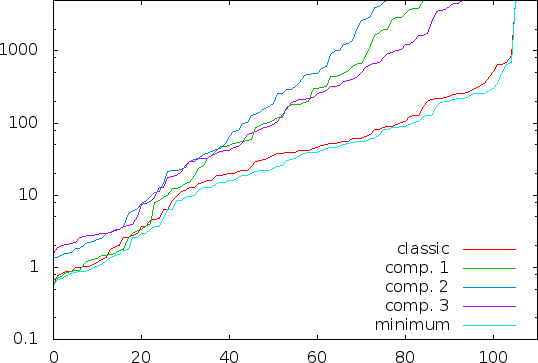}
	\caption{{Performances for 108 AMBA benchmarks.
          The classical algorithm was able to solve 106 benchmarks, comp.1 84, comp.2 76, comp.3 93.}}
	\label{fig:amba}
    \end{center}
  \end{minipage}\\
\end{figure}

\bibliographystyle{eptcs}
\bibliography{refs}

\end{document}